\newcommand{\nm}[0]{N}
\newcommand{\be}[0]{\begin{enumerate}}
\newcommand{\ee}[0]{\end{enumerate}}
\newcommand{\bi}[0]{\begin{itemize}}
\newcommand{\ei}[0]{\end{itemize}}
\newcommand{\OO}{\mathrm{O}}
\newcommand{\Decreasekey}{\mbox{\it decrease-key}}
\newcommand{\Delete}{\mbox{\it delete}}
\newcommand{\Deletemin}{\mbox{\it delete-min}}
\newcommand{\Findmin}{\mbox{\it find-min}}
\newcommand{\Insert}{\mbox{\it insert}}
\newcommand{\Makeheap}{\mbox{\it make-heap}}
\newcommand{\Meld}{\mbox{\it meld}}
\newtheorem{theorem}{Theorem}
\newtheorem{lemma}[theorem]{Lemma}
\newtheorem{corollary}[theorem]{Corollary}
\newtheorem{remark}[theorem]{Remark}
\title{A Simpler Proof that Pairing Heaps Take $\OO(1)$ Amortized Time per Insertion}
\author{Corwin Sinnamon\thanks{Department of Computer Science, Princeton University.  Research partially supported by a gift from Microsoft.} \and Robert Tarjan\footnotemark[1]}
\date{}
\begin{document}
\maketitle

\begin{abstract}
The \emph{pairing heap} is a simple ``self-adjusting'' implementation of a heap (priority queue).  Inserting an item into a pairing heap or decreasing the key of an item takes $\OO(1)$ time worst-case, as does melding two heaps.  But deleting an item of minimum key can take time linear in the heap size in the worst case.  The paper that introduced the pairing heap~\cite{FSST86} proved an $\OO(\log n)$ amortized time bound for each heap operation, where $n$ is the number of items in the heap or heaps involved in the operation, by charging all but $\OO(\log n)$ of the time for each deletion to non-deletion operations, $\OO(\log n)$ to each.  Later Iacono~\cite{IaconoPairing} found a way to reduce the amortized time per insertion to $\OO(1)$ and that of meld to zero while preserving the $\OO(\log n)$ amortized time bound for the other update operations.  We give a simpler proof of Iacono's result with significantly smaller constant factors.  Our analysis uses the natural representation of pairing heaps instead of the conversion to a binary tree used in the original analysis and in Iacono's.
\end{abstract}

\section{Introduction}
\label{S:introduction}

A \emph{heap} (or \emph{priority queue}) is a data structure containing a set of items, each with an associated key selected from a totally ordered key space.  Heaps support the following operations:

\begin{itemize}
\item[] $\Makeheap()$: Create and return a new, empty heap.

\item[] $\Findmin(H)$: Return an item of smallest key in heap $H$; return null if $H$ is empty.

\item[] $\Insert(H, x)$: Insert item $x$ with predefined key into heap $H$.  Item $x$ must be in no other heap.

\item[] $\Deletemin(H)$: Delete from $H$ the item that would be returned by $\Findmin(H)$, and return it.

\item[] $\Meld(H_1, H_2)$: Return a heap containing all items in item-disjoint heaps $H_1$ and $H_2$, destroying $H_1$ and $H_2$ in the process.

\item[] $\Decreasekey(H, x, k)$: Given the location of item $x$ in heap $H$, and given that the key of $x$ is at least $k$, decrease its key to $k$. 

\item[] $\Delete(H, x)$: Given the location of item $x$ in heaps $H$, delete $x$ from $H$. 
\end{itemize}

One can implement $\Delete(H, x)$ as $\Decreasekey(H, x, -\infty)$ followed by $\Deletemin(H, x)$.  We shall assume this implementation and not further mention $\Delete$: To within a constant factor its time bound is the same as that of $\Deletemin$.  Henceforth by \emph{deletion} we mean a $\Deletemin$ operation.

If the only operations on keys are comparisons, $n$ insertions followed by $n$ deletions will sort $n$ items by key, so the amortized time\footnote{We shall study \emph{amortized time} throughout. See~\cite{tarjan1985amortized} for information on amortized analysis.} of either insertion or deletion on an $n$-item heap must be $\Omega(\log n)$.  The \emph{Fibonacci heap}~\cite{Fibonacci}  supports deletion in $\OO(\log n)$ amortized time and all other heap operations in $\OO(1)$ amortized time.  Since the invention of the Fibonacci heap, many other heaps with similar efficiency have been developed.  See~\cite{BrodalSurvey, HollowHeaps, KS19}.

One is the \emph{pairing heap}~\cite{FSST86}, a ``self-adjusting" heap that is simple and efficient in practice~\cite{LarkinSenTarjan, StaskoVitter}.  The paper that introduced pairing heaps proved an $\OO(\log n)$ amortized time bound for all operations.  The authors conjectured that pairing heaps have the same amortized efficiency as Fibonacci heaps, but this was disproved by Fredman~\cite{FredmanLB}, who showed that pairing heaps and similar self-adjusting heaps must take $\Omega(\log\log n)$ time per decrease-key if they take $\OO(\log n)$ time per deletion.  Iacono and {\"O}zkan~\cite{IaconoOzkan} proved the same lower bound for a different large class of self-adjusting heaps. These results raise the question of whether pairing heaps, or any other kind of self-adjusting heap, have efficiency matching these lower bounds.  The question remains open for pairing heaps, but we have answered it in the affirmative for two other heap implementations, slim and smooth heaps~\cite{phd2022}, as we discuss in Section~\ref{S:remarks}.

Improving the bound for decrease-key is not our goal here.  Rather, it is to reduce the bound for insertions and melds.  Iacono~\cite{IaconoPairing} reduced the amortized time bound per insertion in pairing heaps to $\OO(1)$ and that of meld to zero while preserving the $\OO(\log n)$ bound for decrease-keys and deletions.  His proof is complicated and has large constant factors, however.  We give a simplified proof of Iacono's result with much smaller constant factors.  That is, we prove that pairing heaps have the following amortized time bounds: $\OO(1)$ per insertion, zero per meld, and $\OO(\log n)$ per decrease-key and deletion.  Our approach applies to other self-adjusting heap implementations, as we discuss in Section~\ref{S:remarks}.

\section{Pairing heaps}
\label{S:pairing-heaps}

A pairing heap stores the items in a heap as the nodes of an ordered tree.  Henceforth we shall speak of nodes rather than items.  The tree is heap-ordered by key: If node $x$ with key $x.key$ is the parent of node $y$ with key $y.key$, then $x.key \leq y.key$.  Thus the root is a node of minimum key.  Access to the tree is via the root.

The fundamental primitives for modifying heap-ordered trees are \emph{linking} and its inverse, \emph{cutting}.  A \emph{link} of the roots of two node-disjoint heap-ordered trees combines the trees by making the root of smaller key the parent of the other root, breaking a tie arbitrarily.  The surviving root is the \emph{winner} of the link; the new child is the \emph{loser} of the link.  We denote by $xy$ a link won by $x$ and lost by $y$.

A link makes the loser the new \emph{first} child of the winner. Thus the children of any node are ordered by link time, latest first.  We think of this order as being left to right: the first and last child are \emph{leftmost} and \emph{rightmost}, respectively; the \emph{left} and \emph{right} siblings of a child $y$ of $x$ are the children linked to $x$ before and after $y$ was linked to $x$, respectively.

A \emph{cut} of a link $xy$ breaks the link, breaking the tree containing $x$ and $y$ into two trees, one rooted at $y$ containing the entire subtree of $y$ as it existed before the cut, and one containing $x$ that is unchanged except for the removal of $y$ and its descendants.

A pairing heap does the heap operations as follows:

\begin{itemize}
\item[] $\Makeheap()$: Create and return an empty tree.

\item[] $\Findmin(H)$: Return the root of $H$.

\item[] $\Insert(H, x)$: Make $x$ into a one-node tree.  If $H$ is non-empty, link $x$ with the root of $H$; otherwise, make $x$ the root of $H$.

\item[] $\Meld(H_1, H_2)$: If either $H_1$ or $H_2$ is empty, return the other; otherwise, return the tree formed by linking the roots of $H_1$ and $H_2$.

\item[] $\Decreasekey(H, x, k)$: Set $x.key = k$.  If $x$ is not the root of $H$, cut the link lost by $x$ and link $x$ with the root of $H$.

\item[] $\Deletemin(H)$: Let $x$ be the root of $H$.  If $H$ has no other nodes, replace $H$ by an empty tree and return $x$.  Otherwise, cut all the links between $x$ and its children, making the list of children of $x$ (ordered by link time) into a list of roots.  Link these roots in two passes.  The first, the \emph{pairing pass} links the roots in pairs left to right, the first with the second, the third with the fourth, and so on.  The second, the \emph{assembly} pass, repeatedly links the current rightmost root with its left neighbor until only one root remains.  Replace $H$ by the tree rooted at the remaining root, and return $x$. See Figure~\ref{F:pairing-heap}.
\end{itemize}

\begin{figure}[ht!]
\centering
\includegraphics[width=3.8in]{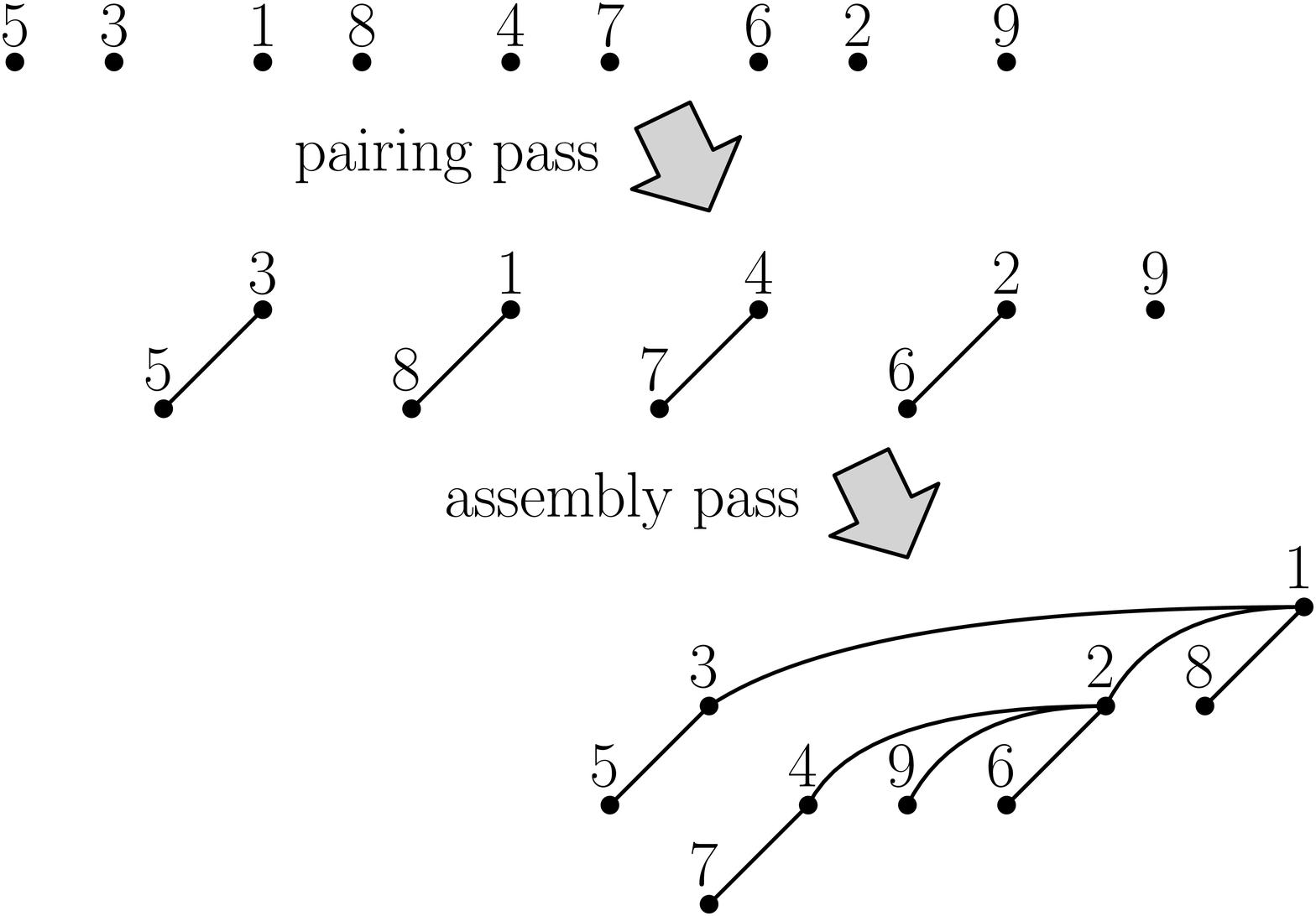}
\caption{Pairing heap linking during delete-min, after the root is deleted.  The circles represent nodes and the numbers indicate their keys.  Lines between nodes represent links, with the winner above the loser.}
\label{F:pairing-heap}
\end{figure}

One good way to represent a pairing heap is to use three pointers per node: one to the leftmost child of a node, one to its right neighbor, and one to its left neighbor, or to its parent if it has no left neighbor.  The first pointer of a node is null if it has no children; the second is null if the node is a rightmost child or a root; the third is null if the node is a root.  This representation allows a link or cut to be done in $\OO(1)$ time and makes the worst-case time per operation $\OO(1)$ plus the number of links done.

A more compact representation with only two pointers and one bit per node~\cite{FSST86} uses the first pointer to indicate the leftmost child of a node, or its right neighbor if it has no children.  If a node is a leftmost child, its second pointer indicates the right neighbor of its parent, or its parent if it has no right siblings; if a node is not a leftmost child, its second pointer indicates its left neighbor.  Pointers are null when the designated nodes do not exist.  This representation is ambiguous in one case, when the first pointer of a node $x$ indicates $y$ and the second pointer of $y$ indicates $x$.  Node $x$ could have no right siblings and leftmost child $y$, or $x$ could have no children and right neighbor $y$.  To disambiguate the representation in this case, each node has a bit indicating whether it is a leftmost child.  Only the root has a null second pointer.

A third representation uses \emph{hollow nodes}~\cite{HollowHeaps}.  We make the tree exogenous rather than endogenous~\cite{tarjanbook}: Instead of the nodes \emph{being} the items, they \emph{hold} the items, or pointers to the items.  Each node has a pointer to its leftmost child and to its right neighbor.  To do a decrease-key of an item not in the root, we allocate a new node and move the item whose key decreases to this new node, making its old node \emph{hollow}.  We move the children of the old node to the new node.  During a deletion, we deallocate any hollow nodes that become roots.  Each item maintains a pointer to the current node holding it.  The hollow-node representation is simple, but it uses at least as much space as either of the endogenous representations, and if there are many decrease-key operations a tree can become mostly hollow (although one can do periodic deallocation of hollow nodes to overcome this). It is simple though, and it may be appropriate in situations that require an exogenous representation.    

We consider a sequence of pairing heap operations on an initially empty collection of heaps.  Each operation except deletion does at most one link and takes $\OO(1)$ time.  Deletion takes $\OO(1)$ time plus $\OO(1)$ time per link. Hence to bound the total time of the sequence we just need to bound the number of links.

\section{Node and link types}
\label{S:node-and-link-types}

We call a node \emph{temporary} if it is eventually deleted, \emph{permanent} if not.  We state our time bounds in terms of the number of temporary nodes.  Given a heap operation, we denote by $n$ the number of temporary nodes in the heap or heaps undergoing the operation.  In stating time bounds we assume $n \geq 4$; if $n < 4$ in a given operation the amortized time of the operation is $\OO(1)$.

To bound the number of links, we classify links in four overlapping ways.  We call a link an \emph{insertion link}, \emph{decrease-key link}, \emph{pairing link}, or \emph{assembly link} if it is done by an insertion or meld, by a decrease-key, by the pairing pass of a deletion, or by the assembly pass of a deletion, respectively. 

\begin{lemma}
\label{L:insertion-links}
The number of insertion links is at most the number of insertions.
\end{lemma}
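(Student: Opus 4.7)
The plan is to consider the graph $G$ whose vertex set is the set of all items ever inserted and whose edge set is the set of all insertion links ever performed. I would argue that $G$ remains a forest throughout the operation sequence. Since a forest on $I$ vertices contains at most $I$ edges, this immediately implies that the number of insertion links is at most the number of insertions.

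To show $G$ is acyclic, I would maintain by induction on the operation sequence the invariant that any two items currently in the same heap lie in the same connected component of $G$. An insertion into a non-empty heap adds an edge between the new item $x$ and the current root $r$, merging $\{x\}$ with the $G$-component containing the items of $H$; a meld of two non-empty heaps adds an edge between the two roots $r_1$ and $r_2$, joining the $G$-components containing the items of the respective heaps. Decrease-key and delete-min create no insertion links and do not transfer items between heaps, so the invariant is preserved. With the invariant in hand, every new insertion link is added between two roots that lie in distinct heaps---either because one was a freshly created singleton, or because they were the roots of the two heaps being melded---hence in distinct $G$-components, so the new edge cannot create a cycle.

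The main subtlety is justifying that two items currently in different heaps must be in different $G$-components. This reduces to the fact that pairing heap operations never split a heap: the only way for an item to leave a heap is via deletion, and a deleted item never participates in any further operation. Thus two items that are currently alive in different heaps were never together in any heap, which, combined with the invariant, forces them into distinct components of $G$ and keeps $G$ a forest.
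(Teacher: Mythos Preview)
Your forest approach works in principle, but the invariant you propose is too weak to close the argument. You maintain that items currently in the same heap lie in the same $G$-component, and then try to derive the converse (for alive items) from the fact that heaps never split. The deduction in your last paragraph does not go through: knowing that two alive roots $r_1$ and $r_2$ were never simultaneously in any heap does not, together with your one-directional invariant, rule out a $G$-path between them that passes through \emph{deleted} vertices, and your invariant says nothing about such paths. The fix is to carry the two-sided statement through the induction: \emph{two currently alive items lie in the same $G$-component if and only if they lie in the same heap}. The inductive step is then straightforward (an insertion or meld link merges two components whose alive members, by the hypothesis, are exactly the items of the two heaps being combined; deletions and decrease-keys add no $G$-edges and move no items between heaps), and with this stronger invariant every new insertion link joins two distinct components, so $G$ stays a forest.

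For comparison, the paper's own proof avoids all of this with a short counting argument: the number of non-empty heaps starts at zero, is always non-negative, increases by one only when an insertion hits an empty heap (which performs no link), and decreases by one whenever a meld performs a link. Hence the number of meld links is at most the number of link-free insertions, and adding the link-performing insertions to both sides gives the lemma. Your argument, once repaired, is a legitimate alternative, but it is noticeably heavier machinery for a fact that falls out of three lines.
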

\begin{proof}

Consider the number of non-empty trees in a collection of heaps. This number is initially zero, always non-negative, decreases by one each time a meld does a link, and can only increase, by one, during an insertion into an empty tree, which does not do a link. Hence the number of melds that do links is at most the number of insertions that do not do links.
\end{proof}

\begin{lemma}
\label{L:assembly-links}
The number of assembly links during a deletion is at most the number of pairing links during the deletion. 
\end{lemma}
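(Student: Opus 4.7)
The plan is to prove the lemma by a direct counting argument based on the number of children of the root at the start of the deletion. Let $k$ denote the number of children of the root immediately before they are cut loose to become a list of $k$ roots.

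First I would observe that the pairing pass walks left to right pairing consecutive roots, so it performs exactly $\lfloor k/2 \rfloor$ links and leaves a list of $\lceil k/2 \rceil$ roots (the last root is left alone when $k$ is odd). Next I would note that the assembly pass repeatedly links the rightmost root with its left neighbor, each such link reducing the number of surviving roots by exactly one, and continues until a single root remains. Thus the assembly pass performs exactly $\lceil k/2 \rceil - 1$ links (and zero when $k \le 1$, in which case the pairing pass also does nothing and the deletion itself is trivial).

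Finally I would conclude by comparing the two counts: we need $\lceil k/2\rceil - 1 \le \lfloor k/2 \rfloor$. When $k$ is even this reads $k/2 - 1 \le k/2$, and when $k$ is odd it reads $(k-1)/2 \le (k-1)/2$; both hold trivially. The statement of the lemma follows.

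There isn't really a hard part here: the claim is a direct consequence of the definitions of the two passes, and the only subtlety worth mentioning explicitly in the write-up is that the pairing pass leaves $\lceil k/2\rceil$ roots (not $\lfloor k/2\rfloor$) because the unpaired trailing root, when $k$ is odd, survives into the assembly pass. I would make sure to call this out so that the arithmetic for the odd case is transparent.
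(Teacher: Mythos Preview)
Your argument is correct and is essentially the same counting argument the paper gives, just parameterized by the number of children rather than the total number of links. The paper's version is terser (``a deletion that does $k$ links between $k+1$ roots does at least $k/2$ pairing links and hence at most $k/2$ assembly links''), but the content is identical.
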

\begin{proof}
Consider a deletion that does $k$ links between $k+1$ roots.  It does at least $k/2$ pairing links and hence at most $k/2$ assembly links.
\end{proof}

In addition to classifying links based on the operation that does them, we classify them based on their futures. A link is a \emph{deletion link}, abbreviated \emph{d-link}, if it is cut by a deletion; a \emph{key link}, abbreviated {k-link}, if it is cut by a decrease-key; a \emph{final link}, abbreviated \emph{f-link}, if it is never cut.  

\begin{lemma}
\label{L:f-links}
The number of final links plus the number of deletions is at most the number of insertions.
\end{lemma}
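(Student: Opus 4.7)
The plan is to read the inequality as a statement about the final state of the heap collection. After all operations are done, the surviving nodes are exactly the permanent nodes, and the surviving links are exactly the final links; since the collection is always a forest, we will get $\#\text{f-links} \leq \#\text{permanent nodes}$ essentially for free, and then counting nodes by insertions and deletions will finish the job.

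First I would argue that every final link has two permanent endpoints. A temporary node $y$ is eventually deleted, and deletion only removes roots, so $y$ must become a root at some point. For that to happen, the link by which $y$ was most recently a child must be cut, so $y$ cannot be the loser of a final link. Symmetrically, if a temporary node $x$ is the winner of some link $xy$, then when $x$ is eventually deleted (as the root) that deletion cuts every remaining link into $x$'s children, including $xy$; so $x$ cannot be the winner of a final link either. Hence every final link joins two permanent nodes.

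Next I would observe that a link is final if and only if it survives to the end of the sequence: never being cut means it persists, and persisting past the last operation means it is never cut. Therefore the final links are precisely the edges of the forest formed by the heaps that remain at the end, whose vertex set is exactly the set of permanent nodes. Since this graph is a forest, $\#\text{f-links} \leq \#\text{permanent nodes}$. Each insertion creates one node and each deletion destroys one, and no other operation changes the node count, so $\#\text{permanent nodes} = I - D$ where $I$ and $D$ are the total numbers of insertions and deletions. Rearranging gives $\#\text{f-links} + D \leq I$, as required.

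I do not see any real obstacle in this argument; it is essentially a one-line accounting once the structural observation is in place. The only step that calls for care is verifying that a link cannot be final when either endpoint is temporary, and that is immediate from the fact that \Deletemin{} operates only on roots.
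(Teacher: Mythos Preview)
Your proof is correct and follows essentially the same route as the paper's: both arguments boil down to the observation that each final link is lost by a distinct permanent node, so $\#\text{f-links} \leq \#\text{permanent nodes}$, and then $\#\text{permanent nodes} + \#\text{deletions} = \#\text{insertions}$. Your extra verification that the \emph{winner} of a final link is also permanent is true but unnecessary---the loser alone gives the injection from final links into permanent nodes, which is exactly the forest inequality you invoke.
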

\begin{proof}
Every final link is lost by a different permanent node.  Every deletion deletes a different temporary node.
\end{proof}

We call a link \emph{real} if its winner and loser are both temporary and it is not cut by a decrease-key, \emph{phantom} if its winner or loser is permanent or it is cut by a decrease-key.  We call a child \emph{real} or \emph{phantom} if it is connected to its parent by a real or phantom link, respectively.

We call a link done during a deletion \emph{left} or \emph{right} if the loser is left or right of the winner on the root list before the link, respectively.

By Lemma~\ref{L:assembly-links}, it is enough to bound the number of pairing links.  We do this in three steps.  First, in Section~\ref{S:pairing-links} we show that the number of such links is at most a constant times the number of insertions, decrease-keys, real right assembly links, and real pairing links.  Second, in Section~\ref{S:size} we bound the number of real right assembly links.  Third, in Section~\ref{S:mass} we bound the number of real pairing links.  In Section~\ref{S:total} we combine our bounds to complete our analysis of pairing heaps.

\section{Pairing links}
\label{S:pairing-links}

\begin{theorem}
\label{T:pairing}
The total number of pairing links during a sequence of pairing heap operations is at most four per insertion plus three per decrease-key plus two per real right assembly link plus two per real pairing link.
\end{theorem}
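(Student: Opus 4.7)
The plan is to classify each pairing link by its fate and the permanence of its endpoints. A pairing link is \emph{real} (both endpoints temporary and not cut by a future decrease-key) or \emph{phantom}. Real pairing links contribute directly to the ``$2$ per real pairing link'' term. For phantom pairing links, I would refine by fate: a $k$-link is cut by a decrease-key, so there is at most one such phantom pairing link per decrease-key; an $f$-link is never cut, and by Lemma~\ref{L:f-links} the total number of $f$-links is at most the number of insertions; a $d$-link is cut by a deletion of its winner, so the winner must be temporary, and hence the loser must be permanent (else both endpoints would be temporary and the link would be real). Call this last class the \emph{$P_d$-links}.

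Combining the easy cases, the total number of pairing links is at most
\[ (\text{real pairing}) + I + D + |P_d|, \]
where $I$ and $D$ are the numbers of insertions and decrease-keys. To obtain the theorem's bound $4I + 3D + 2R + 2P$, where $R$ is the number of real right assembly links and $P$ the number of real pairing links, it therefore suffices to establish
\[ |P_d| \leq 3I + 2D + 2R + P. \]

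To bound $|P_d|$, I would use an amortized argument with a carefully chosen potential $\Phi$. The difficulty is that a single permanent node $v$ can be the loser of arbitrarily many $P_d$-links: each time $v$ loses a pairing link to a temporary $w$ and $w$ is later deleted, $v$ re-enters a fresh pairing pass and may again lose, producing an arbitrarily long chain that defeats any direct charging to permanent nodes, insertions, or decrease-keys alone. The potential $\Phi$ should track the ``latent cost'' of permanent nodes currently in phantom-child positions (for instance, a weighted count of permanent children of temporary nodes, with weights depending on their position in the parent's child list). The plan is to verify that each insertion raises $\Phi$ by at most $3$, each decrease-key by at most $2$, each real right assembly link by at most $2$, and each real pairing link by at most $1$, while each $P_d$-link corresponds to a unit drop in $\Phi$.

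The main obstacle will be designing $\Phi$ precisely and performing the case analysis at a deletion, where many permanent nodes may simultaneously be paired and possibly lose. I expect the interplay between the pairing pass and the subsequent assembly pass to be the crux: the appearance of real right assembly links in the bound presumably reflects the fact that when the rightmost temporary root is absorbed by its left neighbor during assembly, it vacates a phantom-child position elsewhere in the restructured tree, releasing enough potential to pay for the corresponding $P_d$-links. Making this intuition quantitative — so that the growth constants $(3,2,2,1)$ come out exactly — will be the most delicate part of the argument.
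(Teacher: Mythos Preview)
Your initial decomposition is exactly right and matches the paper: after separating out real pairing links, pairing $k$-links, and pairing $f$-links (the paper's Case~1), what remains are precisely your $P_d$-links --- pairing $d$-links won by a temporary node $v$ and lost by a permanent node $w$. The inequality you need, $|P_d| \le 3I + 2D + 2R + P$, is also the correct target.

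The gap is in the second half. You propose to bound $|P_d|$ via a potential $\Phi$ on permanent nodes in phantom-child positions, but you never construct $\Phi$, and your stated intuition for why real right assembly links should enter (``when the rightmost temporary root is absorbed by its left neighbor during assembly, it vacates a phantom-child position'') is not the actual mechanism. The paper does \emph{not} use a potential here at all; it uses a direct charging argument that looks \emph{backward} at the two links $uv$ and $uw$ cut when the deleted root $u$ released $v$ and $w$ onto the root list. Each of $uv$, $uw$ is an insertion, decrease-key, pairing, or assembly link from some earlier operation, and the case analysis on these types is what drives the bound. The appearance of real right assembly links comes from the case where $uv$ was an assembly link and $uw$ a pairing link: after $u$ won $uv$ during that earlier assembly pass, $u$ became the rightmost root and then either lost a right assembly link $xu$ or survived as the sole root; since $u$ is temporary, $xu$ is a $k$-link or a real right assembly link, and one argues it is charged at most twice (once for the first and once for the last $d$-link $u$ won in that pass). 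This backward-looking charge to the \emph{link $u$ subsequently lost} is the key idea you are missing, and it is not clear your forward-looking potential on permanent-node positions can reproduce it.
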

\begin{proof}
Let $vw$ be a pairing link created during the deletion of a node $u$. Node $u$ is temporary, since it is deleted.  Let $uv$ and $uw$ be the links lost by $v$ and $w$ that were cut when $u$ was deleted. Both links were done in earlier operations.  In the following five cases, we charge $vw$ to itself or to a link or operation related to $uv$ or $uw$.  We say $vw$ is a \emph{Case-$i$ link} if Case $i$ is the first that applies to $vw$.

\noindent\textbf{Case 1:}
Link $vw$ is a k-link, f-link, or real link.  We charge $uv$ to itself.  There is at most one Case-1 link per pairing f-link, pairing k-link, and real pairing link.

In the remaining cases $v$ is temporary and $w$ is permanent, since if $v$ were permanent $vw$ would be a k-link or f-link, and if $v$ and $w$ were temporary $vw$ would be a k-link or real link.

\noindent\textbf{Case 2:}
Link $uw$ or $uv$ is an insertion link or a decrease-key link. Charge $vw$ to this link.  Link $vw$ is uniquely determined by either $uv$ or $uw$, since $vw$ is the first link that $v$ and $w$ participate in after $u$ is deleted.

In the remaining cases $uv$ and $uw$ are pairing or assembly links.

\noindent\textbf{Case 3:}
Link $uv$ is a pairing link.  Link $uv$ is real because $u$ and $v$ are temporary and $uv$ was cut by a deletion.  We charge $vw$ to $uv$.  Link $uv$ can only be charged once in this way since $uv$ determines $vw$, so there is at most one Case-3 link per real pairing link.

\noindent\textbf{Case 4:}
Link $uw$ is an assembly link.  Node $w$ must have won a link during the pairing pass that preceded $uw$, or else $w$ was the one root that did not participate in a pairing link.  Since $w$ is permanent, if it won a link, it was a k-link or f-link.  We charge $vw$ to this link.  If $w$ did not win a link, we charge $vw$ to the deletion that did $uw$. There is at most one Case-4 link per pairing k-link, pairing f-link, and deletion.

\noindent\textbf{Case 5:}
Link $uv$ is an assembly link and $uw$ is a pairing link.  Consider the assembly pass that does $uv$.  After winning $uv$, $u$ is the rightmost root, and it either loses a right assembly link during the pass (say $xu$) or it is the sole remaining root at the end of the pass.  Since $u$ is temporary, $xu$ is either a k-link or a real right assembly link.  We charge $vw$ to $xu$ if it exists, to the deletion that does $uv$ otherwise.  Whichever is charged, we shall prove it is charged at most twice over all Case-5 links.  Since $v$ and $w$ are adjacent children of $u$ when $u$ is deleted, $uv$ and $uw$ are consecutive among the d-links won by $u$.  Link $uw$ is a pairing link, and must be done before or after the assembly pass that does $uv$.  Hence $uv$ is either the first or the last d-link that $u$ wins during this assembly pass. It follows that $xu$ or the deletion is charged at most twice: once for the first d-link $u$ wins during the pass, and once for the last.  Hence there are at most two Case-5 links per real right assembly link, assembly k-link, and deletion.

Combining the bounds for all cases, the number of pairing links is at most one per insertion link (Case 2) and decrease-key link (Case 2); two per pairing f-link (Cases 1 and 4), pairing k-link (Cases 1 and 4), assembly k-link (Case 5), real pairing link (Cases 1 and 3), and real right assembly link (Case 5); and three per deletion (Cases 4 and 5).  By Lemma~\ref{L:f-links} the number of f-links plus the number of deletions is at most the number of insertions, so combining the bounds for insertion links, f-links, and deletions gives a total of at most four times the number of insertions.  Combining the bounds for decrease-key links, pairing k-links, and assembly k-links gives a total of at most three times the number of decrease-keys.  The bound in the theorem follows.     
\end{proof}

\begin{remark}
Even though by Lemma~\ref{L:assembly-links} it is enough to count pairing d-links, it does \emph{not} follow from the proof of Theorem~\ref{T:pairing} that it suffices to count real pairing links. It can happen for example that \emph{all} the pairing links in a deletion are won by temporary nodes and lost by permanent nodes.
\end{remark}

\section{Real right assembly links}
\label{S:size}

To bound the number of real right assembly links, we introduce the notion of node size.  The \emph{size} of $x.s$ of a temporary node $x$ is the number of its descendants, including itself, connected to $x$ by a path of real links.  The size of $x$ is at least one, at most the number of temporary nodes in its subtree, and never decreases, since a real link is cut only when its winner is deleted.  The size of $x$ increases only when $x$ wins a real link with a node $y$, and then it increases by $y.s$.

We shall show that certain real right assembly links cause node size doublings, and use this to bound the number of real right assembly links.  Our argument uses credits and debits.  One credit will pay for one real right assembly link.  We allocate a certain number of credits to each operation. We are allowed when needed to borrow a credit to pay for a link.  This incurs a debit.  We can use credits allocated to later operations to pay off debits.  If there are no debits at the end of a sequence of operations, the sum of the credits allocated to the operations is an upper bound on the number of links.  

Now let's get into the details.  Each time a real right assembly link is done during a sequence of heap operations, we borrow a credit to pay for it.  This incurs a debit, which we give to a temporary node.  To pay off debits, we allocate $\lg n$ credits to each deletion and $(\lg n)/2$ credits to each decrease-key, where $n$ is the number of temporary nodes in the heap at the time of the operation.\footnote{We denote by $\lg$ the base-two logarithm.}

We shall maintain the following \emph{debit invariant}:

\begin{enumerate}
\item[] \textbf{A:} Each temporary node $x$ has at most $(\lg x.s)/2$ debits, except the rightmost root during the assembly pass of a deletion, which if temporary has at most $\lg x.s$ debits.
\end{enumerate}

This invariant gives special status to the rightmost root during the assembly pass of a deletion: Its debit allowance is double the normal amount.

\begin{lemma}
\label{L:assembly-debits}
Any sequence of heap operations maintains A. 
\end{lemma}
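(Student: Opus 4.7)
The plan is induction on the operation sequence. Invariant A holds vacuously initially, so it suffices to check that each operation preserves A.

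Sizes are monotone nondecreasing: $x.s$ changes only when $x$ wins a real link (in which case it grows), and cutting a phantom link leaves the real-descendant structure of both endpoints intact. Thus allowances never shrink due to size changes, and A can be violated only when (i) a new debit is introduced (one per real right assembly link) or (ii) a node ceases to be the rightmost root during an assembly pass (halving its allowance). Both (i) and (ii) occur only inside the assembly pass of a deletion, so insertion, meld, decrease-key, the initial child-cutting of a deletion, the pairing pass, and left assembly links all preserve A automatically.

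The core case is a right assembly link $xu$ in the assembly pass, where the old rightmost $u$ loses to its left neighbor $x$. Write $s = x.s$ and $t = u.s$ just before the link. For a real right assembly link (both $x$ and $u$ temporary, and the link not a future k-link), I would transfer up to $(\lg t)/2$ debits from $u$ to $x$, so that $u$ sits within its reduced allowance $(\lg t)/2$, and charge the one new debit from this link to $x$ as well. Then $x$ has at most $(\lg s)/2 + (\lg t)/2 + 1$ debits while its new allowance, as the new rightmost of size $s+t$, is $\lg(s+t)$. The required inequality
\[
\tfrac{1}{2}\lg s + \tfrac{1}{2}\lg t + 1 \leq \lg(s+t),
\]
is equivalent to $2\sqrt{st} \leq s+t$ and follows from AM-GM; this is the one clean calculation at the heart of the proof.

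For phantom right assembly links no new debit is added; $u$'s excess debits are again transferred to $x$ when it is temporary, or paid off from credits otherwise --- the deletion's $\lg n$ credits when $x$ is permanent, and the $(\lg n)/2$ credits of the future decrease-key that will cut the link in the k-link subcase. The end of the assembly pass similarly pays off the $\leq (\lg n)/2$ excess debits on the final rightmost using the deletion's credits. The hard part will be the bookkeeping of these phantom subcases: one must associate each debit payoff with a specific credit source (the current deletion or the unique future decrease-key that cuts a given k-link) and argue that no credit source is charged beyond its budget. The real-right-link case, by contrast, is handled cleanly by the single AM-GM application above.
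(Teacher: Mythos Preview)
Your treatment of the real right assembly link is correct and is the same AM--GM step the paper uses. The gap is in the phantom right assembly link $xu$.

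Transferring $u$'s excess debits to $x$ ``when $x$ is temporary'' cannot work here: the link is phantom, so $x.s$ does not change, and $x$'s allowance as the new rightmost is only $\lg x.s$. After absorbing $u$'s excess of up to $(\lg u.s)/2$ debits, $x$ would carry up to $(\lg x.s)/2 + (\lg u.s)/2$, and nothing forces $u.s \le x.s$. The slack you exploited in the real case came precisely from the size increase $x.s \mapsto x.s + u.s$, which is absent for a phantom link. Your fallback of charging the current deletion when $x$ is permanent is also unsafe: a single assembly pass can contain several phantom right links with permanent winners and temporary losers (interleave them with real right links so each temporary loser arrives with genuine excess), and the deletion's $\lg n$ credits are already committed elsewhere.

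The missing observation, which the paper uses, is that \emph{every} phantom right assembly link $xu$ with temporary loser $u$ is a k-link. If $x$ is permanent the link is not a d-link ($x$ is never deleted) and not an f-link (the temporary $u$ must eventually become a root to be deleted, so the link is cut); if both $x$ and $u$ are temporary, a phantom link is by definition a k-link. Hence the excess $(\lg u.s)/2$ is always paid by the $(\lg n)/2$ credits of the unique future decrease-key that cuts $xu$ (and $u.s$ cannot change before that cut, so $u.s \le n$). No other credit source is needed for this case, and the bookkeeping you flagged as ``the hard part'' disappears.

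One further omission: you do not pay off the at most $(\lg n)/2$ debits sitting on the root when it is deleted. The paper spends half of the deletion's $\lg n$ credits here and the other half on the final rightmost at the end of the pass, as you note.
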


\begin{proof}
Invariant A holds initially since there are no debits.  It can fail only when a node with debits is deleted (we must pay off its debits), when an assembly link occurs (incurring a debit) or when an assembly pass ends. (If the remaining root is temporary we must pay off its excess debits.)    

Consider a deletion.  If the root deleted is temporary, we use $(\lg n)/2$ of the credits allocated to the deletion to pay off any debits it has accrued.  This restores A.  The pairing pass of the deletion preserves A, so A holds at the beginning of the assembly pass.  If remaining root $x$ is temporary when the assembly pass ends, we use the remaining $(\lg n)/2$ of the credits allocated to the deletion to pay off any excess debits it was allowed during the assembly pass.  This makes A true after the assembly pass.  It remains to show that assembly links preserve A.

A left assembly link $xy$ preserves A, since the size of $y$ does not change and the size of $x$ does not decrease.

Let $xy$ be a phantom right assembly link.  Whether $x$ is permanent or temporary, $xy$ must be a k-link: If $x$ is permanent, $xy$ cannot be an f-link, since $y$ is eventually deleted; if $x$ and $y$ are both temporary, the phantom link $xy$ is a k-link by definition.  Link $xy$ makes $x$ the rightmost root, so we must pay off the extra $(\lg y.s)/2$ debits allowed to $y$ by A.  For this we use the $(\lg n)/2$ credits allocated to the decrease-key that cuts $xy$, where $n$ is the number of temporary nodes in the tree containing $y$ when the cut happens.  Since the size of $y$ cannot change until $xy$ is cut, $y.s \leq n$.  Thus the decrease-key gives us enough credits to pay off the extra debits, restoring A.

Let $xy$ be a real right assembly link.  This is the crucial case.  We must show that after the link A allows the extra debit incurred by $xy$.  Let variables take their values just before the link.  By A, $x$ and $y$ have at most $(\lg x.s)/2$ and $\lg y.s$ debits before the link, respectively.  The link increases the size of $x$ to $x.s + y.s$ and makes $x$ the rightmost root, so after the link A allows $x$ and $y$ to have a total of $\lg(x.s+ y.s) + \lg(y.s)/2$ debits.  The latter minus the former is
\[\lg(x.s + y.s) - (\lg x.s)/2 - (\lg y.s)/2 \geq 1\]
by the inequality $2\lg(a + b) \geq \lg a + \lg b + 2$.
Thus A holds after the link: We can move debits between $x$ and $y$ as needed to satisfy it.

The lemma holds by induction on the number of deletions, since there are no debits after all temporary nodes have been deleted.
\end{proof}

Lemma~\ref{L:assembly-debits} implies:

\begin{theorem}
\label{T:real-assembly}
The number of real right assembly links is at most $(\lg n)/2$ per decrease-key plus at most $\lg n$ per deletion. 
\end{theorem}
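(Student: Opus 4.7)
The plan is to derive the theorem immediately from Lemma~\ref{L:assembly-debits} by a straightforward tally of the credit scheme already set up. First I would note that every real right assembly link triggers the borrowing of exactly one credit and hence the creation of exactly one debit on a temporary node, so the total number of real right assembly links over the sequence is exactly the total number of debits ever created.

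Next I would observe that every such debit is ultimately paid off from credits allocated to decrease-keys or deletions, as is already built into the proof of Lemma~\ref{L:assembly-debits}. Each deletion contributes $\lg n$ credits: $(\lg n)/2$ of these clear the debits on the (temporary) root being deleted, and the remaining $(\lg n)/2$ clear any excess debits on the temporary rightmost root at the end of the assembly pass (for which invariant~A had allowed the doubled allowance). Each decrease-key contributes $(\lg n)/2$ credits, which are used whenever the decrease-key cuts a phantom right assembly link, to pay off the extra debits granted to the former rightmost root by invariant~A. Since every temporary node is eventually deleted, no debit survives at the end of the sequence.

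Consequently, the total number of real right assembly links is bounded by the total number of credits allocated, which is $\lg n$ per deletion plus $(\lg n)/2$ per decrease-key, exactly as claimed. There is no substantive obstacle here: all the real work lies in verifying invariant~A in Lemma~\ref{L:assembly-debits}, and Theorem~\ref{T:real-assembly} is essentially the accounting identity that reads off the credit allocations.
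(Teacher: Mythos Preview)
Your proposal is correct and is exactly the paper's approach: the paper simply states that Lemma~\ref{L:assembly-debits} implies the theorem, and you have spelled out the underlying credit-counting identity (each real right assembly link incurs one debit, all debits vanish once all temporary nodes are deleted, so the total is bounded by the allocated credits). There is nothing to add.
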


\section{Real pairing links}
\label{S:mass}

Our analysis of real pairing links is similar to our analysis of real assembly links, but instead of showing that certain assembly links cause size doublings, we show that certain pairing links cause halvings of a related parameter, mass.

The \emph{mass} of a real child $x$, denoted $x.m$, is the size of its parent just after $x$ becomes its child.  Equivalently, the mass of a real child $x$ is one plus the sum of its size and those of its real right siblings.  A phantom child has no mass.  Except in the middle of a deletion, a root has no mass. During the assembly and pairing passes of a deletion, the mass of a temporary root is the sum of its size and those of the temporary roots to its right on the root list.

Unlike node size, which never decreases, the mass of a node can increase or decrease.  Nevertheless, by bounding the increases and showing that certain real pairing links cause halvings in mass, we are able to bound the number of real pairing links.  We do not need to use debits, only credits.  One credit will pay for one real pairing link.  Nodes can hold unused credits.  We allocate $(3/2)\lg n + (\lg e)/2$ credits to each deletion and $\lg n$ credits to each decrease-key.

We shall maintain the following invariant:

\begin{enumerate}
\item[] \textbf{B:} Each real child $x$ and each temporary root during a deletion has at least $(\lg x.m)/2$ credits, except during the pairing pass of a deletion, when the leftmost temporary root $x$ not yet involved in a pairing link in the pass has at least $(3\lg x.m)/2$ credits, and during the assembly pass of a deletion, when the rightmost temporary root is not required to have credits.
\end{enumerate}

\begin{lemma}
\label{L:pairing-credits}
Any sequence of heap operations maintains B. 
\end{lemma}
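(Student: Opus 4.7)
The plan is to mirror Lemma~\ref{L:assembly-debits}: verify that B holds vacuously before any operation (no nodes, no credits), and show by induction that each heap operation preserves it. The key flexibility is that nodes can bank credits between operations, since the invariant is a lower bound, so credits allocated by deletions and decrease-keys accumulate at useful locations and can be transferred when a node changes role.

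The routine operations are insertion, meld, and decrease-key. For a decrease-key of $x$, cutting the link lost by $x$ frees any prior credit obligation, and the new link makes $x$ a real child of the root with mass at most $n$, comfortably funded by the $\lg n$ allocated credits. For insertion and meld, the single new real child's requirement is supplied from credits banked at one of the participating root-like nodes; arguing this suffices is a careful bookkeeping task, following from the fact that a node retains its credits when it ceases to be a real child and acquires new credits along with any loser it absorbs.

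The central case is a deletion of $u$, which allocates $(3/2)\lg n + (\lg e)/2$ credits. Cutting the links from $u$ to its children turns each former real child into a temporary root with a new dynamic mass; part of the allocation is used to raise the leftmost temporary root's credit to $(3/2)\lg m$ and to meet the $(1/2)\lg m$ requirement for each other temporary root. Within the pairing pass, consider a real pairing link $vw$ where $v$ is the leftmost not-yet-involved temporary root and $w$ is its right neighbor. Write $a = v.s$, $b = w.s$, and let $c$ denote the sum of sizes of temporary roots to the right of $w$. Before the link, $v$ holds $(3/2)\lg(a+b+c)$ credits and $w$ holds $(1/2)\lg(b+c)$. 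After the link, the winner (still a root, now ``involved'') requires $(1/2)\lg(a+b+c)$, the loser (now a real child) requires $(1/2)\lg(a+b)$, and the new leftmost not-yet-involved root (of mass $c$, if it exists) must be promoted from $(1/2)\lg c$ to $(3/2)\lg c$. Charging one credit for the link itself, the net balance reduces to the claim
\[ (a+b+c)^2\,(b+c) \;\geq\; 4\,c^2(a+b). \]
Viewed as a quadratic in $a$ with leading coefficient $b+c > 0$, the discriminant simplifies to $-16\,b\,c^3 \leq 0$, so the inequality holds for all $a,b,c \geq 0$, strictly when $b, c \geq 1$. This discriminant identity --- which explains the $3/2$ factor in the invariant --- is the principal technical step.

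The remaining work is the boundary cases: $c = 0$ (where the winner enters the assembly pass as its rightmost temporary root, exempt from the invariant, reducing the balance to $\lg(a+b) + (1/2)\lg b - 1 \geq 0$ for $a,b \geq 1$); phantom pairing links (which incur no cost and only reduce the required-credits sum, since permanent endpoints carry no requirement and the temporary side's size does not grow); and the assembly pass itself (analogous to Lemma~\ref{L:assembly-debits}, with the rightmost-exemption absorbing any transient deficit). The main obstacle I anticipate is combining the delicate credit-banking argument for insertions and melds with the pairing-link analysis based on the discriminant identity.
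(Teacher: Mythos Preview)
Your central pairing-link calculation is correct and the discriminant identity is a nice touch, but the proposal has a genuine gap in the treatment of insertion and meld links, and you flag this yourself as the ``main obstacle'' without resolving it.

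The difficulty is real. Your suggestion that the new real child's $(\lg m)/2$ credits can be ``supplied from credits banked at one of the participating root-like nodes'' does not work: roots outside a deletion carry no requirement under B and hence no guaranteed stockpile. Concretely, insert $n$ temporary nodes with increasing keys into an empty heap. Each new node loses a real insertion link and becomes a real child of mass $i$; the root was never a real child and has absorbed only fresh (creditless) nodes, so it has zero banked credits, yet $\sum_{i=2}^{n}(\lg i)/2 = \Theta(n\log n)$ credits are needed and the allocation gives nothing to insertions. The paper's mechanism is quite different and is precisely where the otherwise mysterious $(\lg e)/2$ per deletion enters: for a real insertion link $xy$, it charges $(\lg n')/2$ to the \emph{future} operation (deletion or decrease-key) that first makes $y$ the sole root of its heap, and covers any shortfall $(\lg n - \lg n')/2$ by distributing the $(\lg e)/2$ from each intervening deletion equally among the surviving nodes, using the harmonic bound $\sum_{i=n'}^{n-1}1/i \geq \ln(n/n')$. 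Without this forward-charging device the invariant cannot be maintained under the stated allocation; note also that you spend the full $\lg n$ of a decrease-key on its own real link, leaving nothing for this role.

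Two smaller points. First, your discriminant argument is valid but heavier than needed: the paper simply discards part of $w$'s credit by using $y.m \geq M$ and then applies $2\lg(A+B)\geq \lg A + \lg B + 2$ with $A=x.s+y.s$ and $B=M$; taking $M=1$ when no temporary root lies to the right handles your $c=0$ boundary automatically. Second, in that boundary case your claim that the winner is immediately exempt is not quite right: the rightmost-root exemption applies only during the assembly pass, so B still demands $(\lg(a+b))/2$ for the winner through any remaining (phantom) pairing links, and your stated balance $\lg(a+b)+(\lg b)/2 - 1$ overcounts by that amount. The paper's $M=1$ device sidesteps this as well.
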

\begin{proof}
Invariant B holds initially.  Only real links, cuts of  real links, phantom pairing links, beginning a pairing pass, or ending an assembly pass can violate B.  We shall show that deletions preserve B and then that decrease-keys and insertion links preserve B.

When a deletion occurs, all children of the root are real.  Deletion of the root decreases by one the mass of each of its children.  These children become the roots on the root list.  Thus deletion of the root preserves B except for the leftmost temporary root if there is one, which now may require extra credits.  To make B true for this root, we give it $\lg n$ of the credits allocated to the deletion.

Suppose B holds just before a pairing link of roots $x$ and $y$, with $x$ left of $y$ before the link.  Let $w$ be the leftmost temporary root not yet involved in a pairing link, if any, and let $z$ be the nearest temporary root right of $y$, if any.  Root $w$ is $x$, $y$, or $z$.  Let variables take their values just before the link.

If the link between $x$ and $y$ is phantom, the link preserves B, since $w.m \geq z.m$ and the link does not increase the mass of any node.

If the link between $x$ and $y$ is real, we need a credit to pay for it.  Let $M$ be the mass of $z$ if $z$ exists, or $1$ if it does not.  Root $w$ is $x$, $x.m = x.s+y.s+M$, and $y.m=y.s+M$.  Before the link, $x$ and $y$ have a total of $(3/2)\lg x.m + (\lg y.m)/2 \geq (3/2)\lg (x.s + y.s + M) + (\lg M)/2$ credits.  After the link, the winner of the link between $x$ and $y$ needs $(\lg x.m)/2 = \lg(x.s+y.s+M)/2$ credits, the loser needs $\lg(x.s + y.s)/2$, and $z$ needs an additional $\lg M$ credits, totaling $\lg(x.s+y.s+M)/2+\lg(x.s + y.s)/2+\lg M$.  Subtracting the number needed after the link from the number available before the link leaves at least
\[\lg(x.s+y.s+M) - \lg(x.s + y.s)/2 - (\lg M)/2 \geq 1\]
by the inequality $2\lg(a + b) \geq \lg a + \lg b + 2$, so the link frees a credit to pay for itself, after we move credits among $x$, $y$, and $z$ as needed to preserve B.  It follows by induction that the pairing pass preserves B.

We do not have to pay for assembly links, but we do have to account for increases in mass caused by such links.  The analysis of the assembly pass is like the proof of Lemma~\ref{L:assembly-debits}.  Invariant B holds at the beginning of the assembly pass. Suppose it holds at some time during the assembly pass.  If the assembly pass is finished, we re-establish B at the remaining root by giving it $(\lg n)/2)$ of the credits allocated to the deletion.  A phantom assembly link preserves B except for the rightmost temporary root: No masses increase, the loser of the link becomes a phantom child, and the rightmost temporary root changes only if the old rightmost root loses the link.

Let $xy$ be a real left assembly link.  The link does not change the mass of $y$ and increases the mass of $x$.  This makes B true for $y$.  Node $x$ remains the rightmost temporary root and so requires no credits.

Let $xy$ be a real right assembly link.  The link increases the mass of $x$, but $x$ becomes the rightmost temporary root, freeing its $(\lg x.m)/2$ credits, where $x.m$ is the mass of $x$ before the link.  The link makes $y$ a real child with mass $x.m$.  We use the credits freed from $x$ to establish B at $y$.

It remains to consider decrease-keys and insertion links.  Cutting a k-link at the beginning of a decrease-key changes no masses and hence preserves B: The loser of the link that is cut was a phantom child and becomes a root, with no mass in either case.  If the decrease-key does a phantom link, the link changes no masses and hence preserves B.  Suppose the decrease-key does a real link $xy$.  Node $y$ becomes a real child, with positive mass.  To establish B for $y$, we give to $y$ $(\lg n)/2$ of the credits allocated to the decrease-key.

A phantom insertion link preserves B.  A real insertion link $xy$ causes the mass of $y$ to increase to at most $n$, the number of temporary nodes in the heap at the time of the link.  Since $y$ is temporary, it eventually becomes the only root in its heap, which must happen before $y$ can be deleted.  Let $n'$ be the number of temporary nodes in the heap the first time after link $xy$ that $y$ becomes the only root in its heap.  To $y$ we give $(\lg n')/2$ of the credits allocated to the operation that caused $y$ to become the only root in its heap.  This operation is either a decrease-key or a deletion, and it is only charged once, since $y$ is the only root in the resulting heap and $xy$ is the most recent previous real insertion link lost by $y$.

If $n' < n$, we need an additional $(\lg n - \lg n')/2$ credits to establish B for $y$  We obtain these from the remaining $(\lg e)/2$ credits allocated to each deletion.  For each deletion, we distribute $(\lg e)/2$ credits equally among the $n'' - 1$ nodes in the heap just after the deletion, $(\lg e)/(2(n'' -1))$ to each.  These credits are spendable at any time, including before the deletion.
 
Between the time $xy$ is created and $y$ next becomes the only root in its heap, there are at least $n - n'$ deletions of other nodes, which give $y$ a total of $(\lg e)/2\sum_{i=n'}^{n-1} 1/i \geq (\lg n - \lg n')/2$ credits, enough to establish B at $y$.

Having covered all the cases, we infer by induction that the lemma is true.
\end{proof}

Lemma~\ref{L:pairing-credits} implies:

\begin{theorem}
\label{T:real-pairing}
The number of real pairing links is at most $\lg n$ per decrease-key plus at most $(3/2)\lg n + (\lg e)/2$ per deletion. 
\end{theorem}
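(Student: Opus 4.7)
The plan is to derive Theorem~\ref{T:real-pairing} as an immediate corollary of Lemma~\ref{L:pairing-credits} via a standard credit-summing argument. Lemma~\ref{L:pairing-credits} has already established that invariant B can be maintained throughout any sequence of pairing heap operations using the credit allocation of $\lg n$ credits per decrease-key and $(3/2)\lg n + (\lg e)/2$ credits per deletion, where $n$ is the number of temporary nodes in the heap at the time of the operation. Crucially, the proof of the lemma demonstrates that every real pairing link is paid for by one credit at the moment it occurs: in the real pairing case, after redistributing credits among $x$, $y$, and the nearest temporary root $z$ to the right to re-establish B, the inequality $2\lg(a+b) \geq \lg a + \lg b + 2$ guarantees that at least one credit is freed, and we hand this credit over as payment for the link.

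Next, I would observe that at the end of the entire sequence every temporary node has been deleted, so no real children and no temporary roots remain, and invariant B therefore requires zero credits to be held. Since credits are only produced by allocations to decrease-keys and deletions and are only spent to pay for real pairing links (no borrowing is ever invoked in the proof of Lemma~\ref{L:pairing-credits}, in contrast to the debit scheme of Lemma~\ref{L:assembly-debits}), the total number of real pairing links performed over the sequence is bounded above by the total number of credits allocated. Summing the allocation per operation yields the per-operation bound in the theorem.

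I do not anticipate any real obstacle, since the heavy lifting is done by Lemma~\ref{L:pairing-credits}. The only small point worth verifying explicitly is that the credit system is genuinely conservative---that is, that every redistribution step inside the lemma's proof leaves a non-negative balance at every node, so that the total real pairing links are indeed bounded by the cumulative allocation rather than by something larger after accounting for temporary shortfalls. This is immediate from an inspection of each case in the proof of the lemma, where the credits available before a link are shown to exceed the credits required after the link plus one, allowing the link to be paid for and B to be restored simultaneously.
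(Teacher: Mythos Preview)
Your proposal is correct and takes essentially the same approach as the paper, which simply states that Lemma~\ref{L:pairing-credits} implies the theorem. Your elaboration of the credit-summing argument (each real pairing link consumes one credit, no borrowing occurs, so the total number of such links is bounded by the total allocation) is exactly the intended reading of that implication.
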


\section{Grand total}
\label{S:total}

Theorems~\ref{T:pairing}, \ref{T:real-assembly}, and \ref{T:real-pairing}, give a bound on the number of pairing links of $4$ per insertion, $3\lg n + 3$ per decrease-key, and $5\lg n + \lg e$ per deletion.  By Lemma~\ref{L:assembly-links}, the same bound applies to assembly links.  Adding the number of insertion links (one per insertion, by Lemma~\ref{L:insertion-links}) and decrease-key links (one per decrease-key), we obtain:

\begin{theorem}
\label{T:total}
The total number of links done by a sequence of pairing heap operations is at most $9$ per insertion, $6\lg n + 7$ per decrease-key, and $10\lg n + 2\lg e$ per deletion.
\end{theorem}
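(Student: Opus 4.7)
The plan is to simply aggregate the bounds already established. Every link done by a pairing heap operation is, by the classification in Section~\ref{S:node-and-link-types}, one of four kinds: an insertion link, a decrease-key link, a pairing link, or an assembly link. So it suffices to bound each of these four counts and add them up.

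First I would bound pairing links by substituting the conclusions of Theorem~\ref{T:real-assembly} and Theorem~\ref{T:real-pairing} into Theorem~\ref{T:pairing}. The ``two per real right assembly link'' term contributes $2 \cdot ((\lg n)/2) = \lg n$ per decrease-key and $2\lg n$ per deletion. The ``two per real pairing link'' term contributes $2 \lg n$ per decrease-key and $3\lg n + \lg e$ per deletion. Adding these to the existing ``four per insertion plus three per decrease-key'' gives at most $4$ per insertion, $3\lg n + 3$ per decrease-key, and $5\lg n + \lg e$ per deletion pairing links. By Lemma~\ref{L:assembly-links} the assembly link count is dominated by the pairing link count within each deletion, so the same bound applies summed over the whole sequence to the total number of assembly links.

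Next I would bound the remaining two categories trivially. By Lemma~\ref{L:insertion-links} the number of insertion links is at most one per insertion, and every decrease-key performs at most one link by definition, so decrease-key links number at most one per decrease-key.

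Finally I would add the four bounds term-by-term, grouping by the operation that is being charged. Per insertion this is $4 + 4 + 1 = 9$; per decrease-key this is $(3\lg n + 3) + (3\lg n + 3) + 1 = 6\lg n + 7$; per deletion this is $(5\lg n + \lg e) + (5\lg n + \lg e) = 10\lg n + 2\lg e$. No real obstacle arises here -- the proof is pure bookkeeping, and the only care needed is to avoid double-counting by remembering that pairing and assembly links are disjoint from insertion and decrease-key links (they occur only inside deletions), and that Lemma~\ref{L:assembly-links} is applied to each deletion individually so the same per-operation coefficients transfer directly from pairing to assembly links.
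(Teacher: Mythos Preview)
Your proposal is correct and mirrors the paper's own argument essentially step for step: substitute Theorems~\ref{T:real-assembly} and~\ref{T:real-pairing} into Theorem~\ref{T:pairing} to get the pairing-link bound, double it via Lemma~\ref{L:assembly-links} for assembly links, and add one per insertion (Lemma~\ref{L:insertion-links}) and one per decrease-key. The arithmetic and the per-operation grouping are exactly as in the paper.
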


\begin{corollary}
The running time of a sequence of pairing heap operations is $\OO(1)$ per insertion, zero per meld, and $\OO(\log n)$ per decrease-key and deletion.
\end{corollary}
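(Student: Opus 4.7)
The plan is to derive the corollary directly from Theorem~\ref{T:total} together with the cost model recalled in Section~\ref{S:pairing-heaps}: each heap operation takes $\OO(1)$ time plus $\OO(1)$ time per link it performs, every non-deletion does at most one link, and a deletion's cost is $\OO(1)$ plus $\OO(1)$ per link. Therefore the amortized time of an operation is, up to an additive $\OO(1)$, proportional to the number of links charged to it.

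For three of the four claimed bounds I would simply read off Theorem~\ref{T:total}: $9$ links per insertion, $6\lg n + 7$ per decrease-key, and $10\lg n + 2\lg e$ per deletion. Multiplying by $\OO(1)$ and adding the constant per-operation overhead yields $\OO(1)$ amortized per insertion, $\OO(\log n)$ amortized per decrease-key, and $\OO(\log n)$ amortized per deletion.

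The only non-trivial step is driving meld down to \emph{zero} amortized cost. Lemma~\ref{L:insertion-links} already shows that the total number of links done by insertions and melds combined is at most the number of insertions, so every meld link is paid for by an insertion via the $9I$ term of Theorem~\ref{T:total} and contributes nothing new. To absorb the $\OO(1)$ per-operation bookkeeping cost of a meld, I would introduce the standard potential $\Phi$ equal to the number of non-empty trees across all heaps: a meld of two non-empty heaps performs one link and decreases $\Phi$ by one, so its $\OO(1)$ real cost is cancelled by the drop in potential, while a meld involving an empty heap does only trivial pointer work. The one place $\Phi$ grows is an insertion into an empty heap, which contributes $+1$ that is easily absorbed into the insertion's $\OO(1)$ amortized cost.

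The main obstacle, modest as it is, is making sure the per-meld constant overhead is absorbed \emph{exactly} rather than merely dominated, so that the amortized meld cost is $0$ and not $\OO(1)$. The potential $\Phi$ above handles this cleanly because meld's only non-trivial case strictly decreases $\Phi$. Everything else is a mechanical translation of the link bounds of Theorem~\ref{T:total} into time bounds.
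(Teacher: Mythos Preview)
Your proposal is correct and matches the paper's (implicit) argument: the corollary is stated without proof, as an immediate consequence of Theorem~\ref{T:total} and the cost model of Section~\ref{S:pairing-heaps}. Your extra step of introducing the potential $\Phi$ (number of non-empty trees) to drive the meld cost to exactly zero is just an explicit restatement of the counting argument already used in the proof of Lemma~\ref{L:insertion-links}, so nothing genuinely new is needed.
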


\section{Remarks}
\label{S:remarks}
One outcome of our analysis is that each decrease-key that does a k-link or an f-link takes $\OO(1)$ amortized time, not $\OO(\log n)$.  In particular, decreasing the key of a permanent node takes $\OO(1)$ amortized time.  Furthermore, the bounds for decrease-key and deletion are in terms of the number of temporary nodes in the heap, not of all nodes.  

If one is willing to allow insertions and melds to take $\OO(\log n)$ amortized time, our analysis can be significantly simplified: We do not need the concept of permanent and temporary nodes, we do not need Theorems~\ref{T:pairing} and \ref{T:real-assembly}, we can use Lemma~\ref{L:assembly-links} directly, and we can improve the constants in Theorem~\ref{T:real-pairing}: In the proof of that theorem, we charge the increase in the mass of the loser of an insertion link to the insertion or meld that does the link.  This gives us the following alternative to Theorem~\ref{T:total}:

\begin{theorem}
\label{T:total-alt}
The total number of links done by a sequence of pairing heap operations is at most $\lg n + 1$ per insertion, $\lg n$ per meld, $\lg n + 3$ per decrease-key, and $2\lg n$ per deletion.
\end{theorem}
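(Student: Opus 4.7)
My plan is to bypass Theorems~\ref{T:pairing} and~\ref{T:real-assembly} entirely and strengthen Theorem~\ref{T:real-pairing} by charging the mass increase of an insertion or meld link directly to the operation that creates it. Accepting the $\OO(\log n)$ amortized cost per insertion and meld removes the need to distinguish permanent from temporary nodes, so every link is treated uniformly and the phantom case analysis disappears.

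Concretely, I would restate Invariant~B as: every child $x$ and every root during a deletion holds at least $(\lg x.m)/2$ credits, with the leftmost unpaired root during the pairing pass holding $(3\lg x.m)/2$ and the rightmost during the assembly pass exempt, exactly as before. The credit allocations become $(\lg n)/2$ per insertion or meld, $(\lg n)/2 + 1$ per decrease-key, and $\lg n$ per deletion. Two simplifications relative to the original Lemma~\ref{L:pairing-credits} drive the smaller constants: first, the $(\lg n)/2$ credits needed by the loser of a fresh insertion or meld link are supplied by the operation itself, eliminating the $(\lg e)/2$-per-deletion harmonic-sum distribution; second, the root that survives the assembly pass has no mass once the deletion ends, so it requires no credit top-up, reducing the deletion's allocation to just the $\lg n$ used to boost the new leftmost root at the start of the pairing pass rather than $(3/2)\lg n + (\lg e)/2$.

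The case verifications are then routine. Each pairing link still frees at least one credit by the inequality $2\lg(a+b) \geq \lg a + \lg b + 2$ used in the original proof. Left assembly links preserve the loser's mass and consume no credits; right assembly links are self-balancing because the winner becomes the new rightmost root and releases exactly $(\lg x.m)/2$ credits, which match the mass assigned to the new child. A decrease-key cut changes no masses, and the subsequent link is covered by the decrease-key's own allocation. Combining Lemma~\ref{L:assembly-links} to bound assembly links per deletion by pairing links, Lemma~\ref{L:insertion-links} to bound insertion-and-meld links by the number of insertions (giving amortized $1$ link per insertion and $0$ per meld), and the pairing-link allocations above—doubled to absorb the assembly links—yields the bounds $\lg n + 1$, $\lg n$, $\lg n + 3$, and $2\lg n$ per insertion, meld, decrease-key, and deletion respectively.

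The main obstacle is verifying that the omission of the end-of-assembly credit top-up is truly safe: when the surviving root later becomes a child, via a meld, an insertion in which it loses, or a decrease-key link, the resulting $(\lg n)/2$-credit obligation must be paid by that future operation's own allocation rather than carried forward as a debt. Because the new allocation scheme charges precisely that amount to each such operation, no deferred accounting across deletions is needed and the theorem follows.
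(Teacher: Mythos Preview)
Your proposal is correct and follows precisely the route the paper sketches in Section~\ref{S:remarks}: drop the permanent/temporary distinction, charge the mass of each insertion or meld link's loser to that operation directly, and use Lemma~\ref{L:assembly-links} to double the resulting pairing-link bound. Your observation that the end-of-assembly top-up is unnecessary---because the surviving root carries no mass and any future link it loses is funded by the operation creating that link---is exactly what brings the deletion allocation down to $\lg n$ and yields the stated constants.
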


One can combine the analyses of Sections~\ref{S:size} and~\ref{S:mass} by using a single credit invariant in place of A and B (with appropriate exceptions in the middle of pairing and assembly passes):

\begin{enumerate}
\item[] \textbf{C:} Every node $x$ that is a real child or temporary root in the middle of a deletion has at least $(\lg x.m - \lg x.s)/2 = \lg(x.m/x.s)/2$ credits.
\end{enumerate}

We have not done this, for two reasons: the analyses of real right assembly links and real pairing links are in fact independent, and (we hope) easier to understand separately rather than together; and the former requires only sizes, not masses.  But the idea of using $\lg(x.m/x.s)$ or some related function of $x.m/x.s$ as a potential gives much more.  By using a function of $x.m/x.s$ that grows more slowly than $\log$, Pettie~\cite{PettiePairing} obtained an amortized time bound for pairing heaps of $\OO(2^{2\sqrt{\lg\lg \nm}})$ per insertion, meld, and decrease-key, where $\nm$ is an upper bound on the size of any heap over all operations, while preserving the $\OO(\log n)$ bound per deletion.  The first author~\cite{phd2022} improved Pettie's bound for insert, meld, and decrease-key to $\OO(\sqrt{\log\log \nm} 2^{\sqrt{2\lg\lg \nm}})$.  We have not yet verified whether our ideas combine with this result to reduce the bound per insertion and meld to $\OO(1)$.

The ``natural" function to use as a potential is $\lg\lg(x.m/x.s)$.  Using this function in a novel analysis along with the concept of temporary and permanent nodes, we have obtained significantly improved bounds for three other kinds of self-adjusting heaps: \emph{multipass} pairing heaps~\cite{FSST86}, in which deletions do repeated paring passes until there is only one root; and slim and smooth heaps~\cite{KS19, HKST21}, which do deletions using locally maximum linking.  Our amortized time bounds for multipass pairing heaps are $\OO(\log n)$ for deletion, $\OO((\log\log n)(\log\log\log n))$ for decrease-key, and $\OO(1)$ for the other operations~\cite{phd2022}. Our bounds for slim and smooth heaps are the same but without the extra $\log\log\log n$ factor in the decrease-key bound~\cite{phd2022}: slim and smooth heaps have efficiency that matches the lower bounds.

Of course, the big open question is whether pairing heaps match the lower bounds.  Being optimists, we conjecture that the answer is yes.

\bibliographystyle{plain}
\bibliography{SH}
\end{document}